\newtheorem{prop}{Proposition}
\newcommand{\tilI}{\tilde{I}}
\begin{document}

\title{Consumer Privacy Protection using\\ Flexible Thermal Loads:\\ Theoretical Limits and Practical Considerations}

\author{
\IEEEauthorblockN{Jun-Xing Chin\IEEEauthorrefmark{1}, Kyri Baker\IEEEauthorrefmark{2}, and Gabriela Hug\IEEEauthorrefmark{3}}\\%
\IEEEauthorblockA{\IEEEauthorrefmark{1} Singapore-ETH Centre, ETH Zurich, Singapore}%
\IEEEauthorblockA{\IEEEauthorrefmark{2} Architectural Engineering, University of Colorado Boulder, Colorado, USA}%
\IEEEauthorblockA{\IEEEauthorrefmark{3} Power Systems Laboratory, ETH Zurich, Zurich, Switzerland}%

Emails: \{chin, hug\}@eeh.ee.ethz.ch, Kyri.Baker@colorado.edu   
}

\fancypagestyle{firstpage}{% Page style for first page
	\fancyhf{}% Clear header/footer
	\renewcommand\headrule{}
	\setlength{\voffset}{-0.4cm}
	\fancyhead[C]{\scriptsize This is the author's version of an article that has been published in this journal. Changes were made to this version by the publisher prior to publication.\\
		The final version of record is available at~ \href{https://doi.org/10.1016/j.apenergy.2020.116075}{\color{blue}{https://doi.org/10.1016/j.apenergy.2020.116075}}\\~\\}
	\fancyhead[L]{\footnotesize Applied Energy}
	\fancyfoot[CO]{\scriptsize~\\~\\~\\Copyright (c) 2020. This manuscript version is made available under the CC-BY-NC-ND 4.0 license \href{http://creativecommons.org/licenses/by-nc-nd/4.0/}{\color{blue}{http://creativecommons.org/licenses/by-nc-nd/4.0/}}}
}

\IEEEoverridecommandlockouts

\pagestyle{fancy}
\renewcommand\headrule{}
\setlength{\voffset}{-0.4cm}
\fancyhead[C]{\scriptsize This is the author's version of an article that has been published in this journal. Changes were made to this version by the publisher prior to publication.\\
	The final version of record is available at~ \href{https://doi.org/10.1016/j.apenergy.2020.116075}{\color{blue}{https://doi.org/10.1016/j.apenergy.2020.116075}}\\~\\}
\fancyhead[L]{\footnotesize Applied Energy}
\fancyfoot[CO]{~\\\scriptsize Copyright (c) 2020. This manuscript version is made available under the CC-BY-NC-ND 4.0 license \href{http://creativecommons.org/licenses/by-nc-nd/4.0/}{\color{blue}{http://creativecommons.org/licenses/by-nc-nd/4.0/}}}

% make the title area
\maketitle
\thispagestyle{firstpage}

\begin{abstract}
The increasing adoption of smart meters introduces growing concerns about consumer privacy risks stemming from high resolution metering data. To counter these risks, there have been various works in actively shaping the grid-visible energy consumption profile using controllable loads such as energy storage systems (ESSs) and flexible consumer loads. In this paper, we compare the use of flexible thermal-based consumer loads (FTLs) against ESSs for consumer privacy protection. By first assuming ideal conditions, and subsequently bringing them closer to reality, the limitations of using FTLs for privacy protection are identified. Through theoretical analyses and realistic simulations, it is shown that, due to the limitations in the operation of FTLs, without significant over-sizing of systems and sacrifices in consumer comfort, FTLs of much higher equivalent energy storage capacity are required to afford the same level of protection as ESSs. Nonetheless, given their increasing ubiquity, controllable FTLs should be considered for use in consumer privacy protection.
\end{abstract}

\begin{IEEEkeywords}
consumer privacy, energy management, energy storage, flexible thermal loads, smart meter
\end{IEEEkeywords}
% \vspace{-0.3cm}

\section{Introduction}\label{Sec:intro}
Spurred by grid modernisation efforts, the adoption rate of advanced metering infrastructure (AMI) using smart meters (SMs) has risen steadily across the globe in recent years. On one hand, this enables the development of efficient data-driven grid operation and management methods \cite{McKenna2012}. On the other hand, the high-frequency measurement data provided by the AMI can be used to derive private information of consumers, such as their lifestyle habits, occupation, and religious inclinations \cite{McDaniel2009, Molina-Markham2010, McKenna2012}. 
The authors in \cite{McKenna2012} provide a comprehensive overview of applications (and information) that can be derived from SM data, while in \cite{Molina-Markham2010}, the authors explore the granularity of SM measurements required to infer specific household activities, and show that some private information can still be inferred at an hourly resolution. More importantly, the authors of \cite{McDaniel2009} find that the existing laws in the US are unclear regarding customer energy data usage, which potentially paves the way for its exploitation. Moreover, a 2017 survey in the US has shown that utilities pose high privacy risks, and are not highly trusted by consumers \cite{pwc2017}. Even in the presence of clear laws that prevent the exploitation of SM data by utility companies, such as the European Union's General Data Protection Regulation \cite{EUGDPR16}, the underlying metering infrastructure is still vulnerable to cyber-attacks, which may lead to SM data disclosure to malicious adversaries \cite{McLaughlin2010}. 

This has led to concerns regarding privacy risks \cite{Winter2019}, and push-backs against the use of SMs, delaying and potentially altering the scope of their deployment, e.g., in the Netherlands \cite{VanAubel2019}. These concerns have motivated works in quantifying and mitigating these risks, such as \cite{Sankar2013,McLaughlin2011,Yang2012,Tan2017,Chin2017,Zhang2017,BakerNILM,Mashima2018,Arzamasov2020}. Nonetheless, quantifying privacy in a meaningful manner remains an ongoing research challenge \cite{Arzamasov2020}. In \cite{BakerNILM} and \cite{Mashima2018}, the authors use the performance of specific data analytics applications, \emph{i.e.}, non-intrusive load monitoring (NILM) and socio-demographic classifiers, respectively, as measures of consumer privacy. While they allow simple and meaningful interpretation of a consumer's level of privacy, NILM techniques, which are reliant on appliance load signatures, are very sensitive to small perturbations in the data \cite{BakerNILM}; whereas machine learning-based classifiers are sensitive to rudimentary privacy protection measures \cite{Mashima2018}. This makes them less robust as a measure of privacy against more sophisticated adversaries. On the other hand, information theoretic privacy measures, such as mutual information (MI)\cite{Sankar2013} and differential privacy\cite{Zhang2017} are attack-agnostic and offer more robust privacy guarantees, but are less readily interpretable in a meaningful manner \cite{Arzamasov2020}. 

A recent article by Giaconi et al.\cite{Giaconi2018a} provides a high-level overview of privacy protection methods for consumers with SMs. In particular, privacy protection schemes can be categorised into two main families, namely \textit{smart meter data manipulation} (SMDM) schemes, and \textit{user demand shaping} (UDS) schemes \cite{Giaconi2018a}. SMDM schemes modify the SM data before it is transmitted, and include aggregating SM measurements before transmission to the data collector \cite{Danezis2013, Koo2017, Mustafa2019}; anonymising the SM measurements to decouple SM data from individual households \cite{Efthymiou2010, Rottondi2012}; and the differential privacy-based addition of noise \cite{Ni2017,Acs2011,Hassan2019}. However, these methods require trusted third parties, either in the processing of the data, or in the supply and installation of SMs with privacy-preserving firmware. UDS methods, on the other hand, physically alter the physical energy consumption profiles of consumers recorded by the SMs (\textit{grid load}), such that they no longer reveal the private information contained in the underlying privacy-sensitive consumer load profiles (\textit{sensitive load}). This is achieved by actively controlling loads to shape the grid load profile, ideally decoupling it from the sensitive load profile. UDS methods can typically be implemented behind-the-meter, which avoids the need for a trusted third-party. 

UDS methods can be further classified into those using energy storage systems (ESSs), those controlling flexible consumer loads, and those using a combination of the two. Fig. \ref{fig:SysMod} illustrates a possible system setup for UDS methods, which is governed by the equation:

\noindent\small
\begin{equation} \label{eq:powerbalance}
    Y = X + S.
\end{equation}\normalsize 
Hence, the flexibility of controllable loads $S$, such as ESSs and flexible consumer loads, is used to influence what can be derived from the grid load $Y$ about the sensitive load $X$.

There are numerous recent UDS schemes that only use ESSs (also known as battery load hiding), e.g., load levelling \cite{McLaughlin2011}, limiting the load profile to distinct steps \cite{Yang2012}, and directly minimising an approximate of MI \cite{Chin2017}. In \cite{Giaconi2018b}, the authors derive theoretical privacy guarantees for consumers with ESSs and renewable energy sources based on ideal assumptions, and show that, while it is possible to numerically evaluate the privacy bounds for realistic batteries using the Blahut-Arimoto algorithm, it is computationally intractable in practice. Arzamasov et al. provide a more recent overview of SM related privacy measures for ESS-based UDS methods in their recent work \cite{Arzamasov2020}, where they also found that the choice of privacy metrics and the characteristics of a consumer's load profile greatly affect the relative performance of ESS-based UDS schemes. They argue that an ideal privacy measure would be the reconstructability of the original unprotected consumer load profile. However, assessing the reconstructability of the consumer load profile given a specific privacy protection scheme is a non-trivial problem that remains to be solved.

On the other hand, UDS methods utilising flexible consumer loads are scarce in the literature. One such UDS scheme, proposed in \cite{Chen2015}, utilises the flexible consumer loads to hide occupancy by using artificial signature injection and partial load flattening. The authors then verify their scheme by testing the resultant load profiles using a few occupancy detection algorithms. Another flexible consumer load-based UDS scheme is given in \cite{Sun2018}, where the authors use flexible consumer loads aided with batteries for privacy protection. However, no in-depth assessment or discussion on the performance of the proposed scheme is included. In \cite{BakerNILM}, optimised electric vehicle charging and an electric furnace are used to obscure recoverable information from NILM techniques. Notwithstanding, the use of flexible consumer loads for general privacy protection irrespective of the adversarial model, and their performance against schemes based on ESSs, are not well studied. 
% Trim = left, bottom, right, top
\begin{figure}
    \centering
    \includegraphics[trim=3.5cm 8cm 11.5cm 3cm, clip=true, width=0.95\columnwidth]{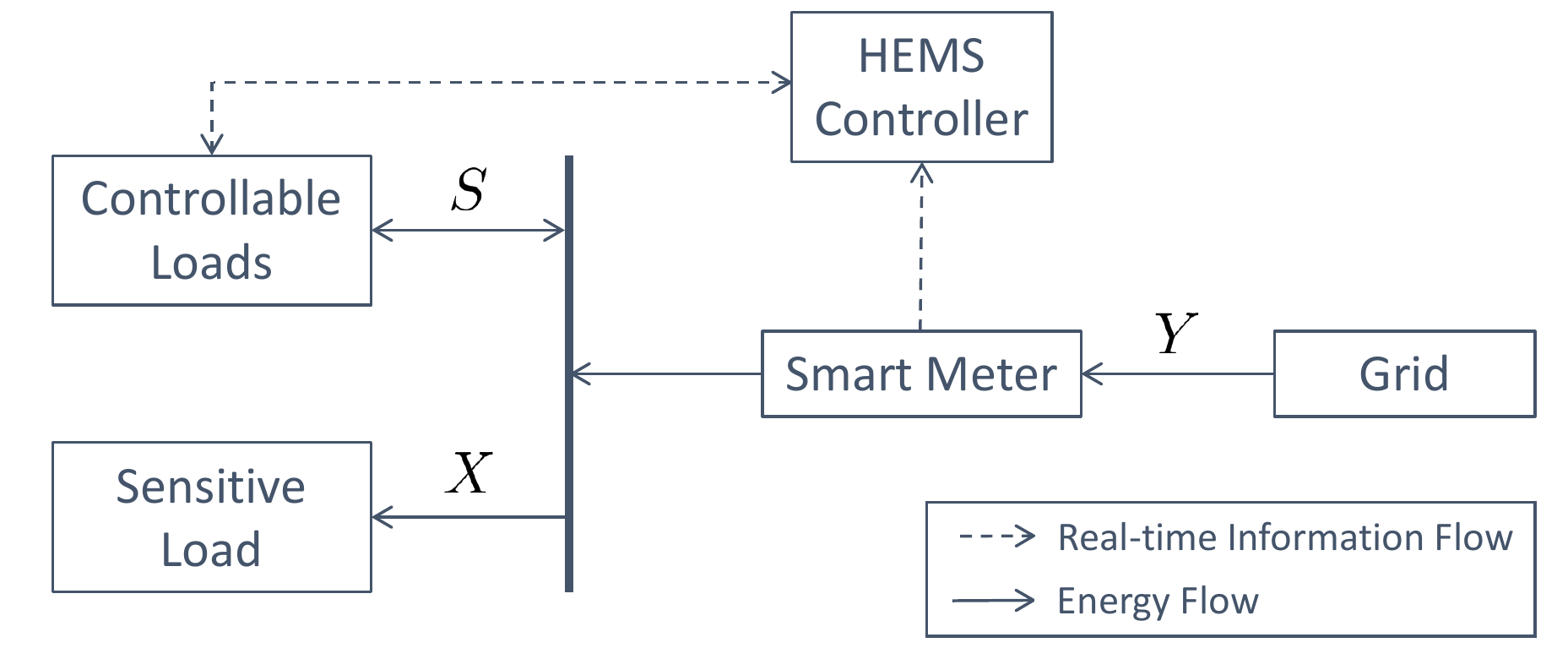}
    \vspace{-0.1cm}
    \caption{Possible UDS system setup, where the controllable load is controlled by the HEMS controller to actively shape the grid visible load to mask the private information in the privacy-sensitive load.}
    \label{fig:SysMod}
    \vspace{-0.5cm}
\end{figure}

With the development of grid communications infrastructure and the proliferation of smart appliances, there are also considerable advances in home energy management systems (HEMSs) that enable the coordination and scheduling of home appliances. HEMSs allow for the optimisation of residential electricity consumption patterns in order to improve efficiency, economics, and the reliability of residential buildings with regards to their role in the grid and occupant comfort\cite{Zhou2016}. Given increasing interest in HEMSs and the ubiquity of flexible consumer loads, this paper explores the use of HEMS-controlled flexible consumer loads in order to mask the private information contained in the grid load about the sensitive load. Specifically, the contributions of this paper are three-fold:
\begin{itemize}
    \item The concept of privacy and flexible consumer loads is formalised for households with smart meters.
    \item The theoretical limits of privacy protection using ESSs and flexible thermal-based consumer loads are analysed, with the findings validated for real-life applications using realistic numerical simulations.
    \item To the best of our knowledge, this paper is the first work to directly compare consumer privacy protection of systems using ESSs against those using flexible thermal-based consumer loads.
\end{itemize}

The rest of this paper is structured as follows: Section \ref{Sec:PrivacyMeasure} provides a brief overview of quantifying privacy loss for consumers with smart meters; Section \ref{Sec:FCLandPrivacy} briefly discusses the use of flexible consumer loads for privacy protection; Section \ref{Sec:FTL-ESS_Theory} provides an analytical comparison between consumer privacy protection using ESSs and flexible thermal-based consumer loads; Section \ref{Sec:ControllerSetup} details the controller design of a HEMS for comparison of realistic systems; Section \ref{Sec:NumEx} presents numerical results; and Section \ref{Sec:Conc} concludes the paper.  

%%%%%%%%%%%%%%%%%%%%%%%%%%%%%%%%%%%%%%%%%%%%%%%%%%%%%%%%%%%%%%%%%%%%%
\section{Quantifying Consumer Privacy Loss}\label{Sec:PrivacyMeasure}
As previously mentioned, one measure of consumer privacy loss is the mutual information between the sensitive load $X$ and the grid load $Y$ \cite{Sankar2013, Tan2017}, which measures the amount of information $Y$ reveals about $X$ and vice versa. Mutual information is capable of modelling nonlinear relationships between variables, unlike using correlation coefficients, for example. The MI between $X$ and $Y$, which are random processes, can be given as the average MI between the random variables $X_{\tau}$ and $Y_{\tau}$ that make up the processes \cite{Tan2017,Chin2018}, \emph{i.e.}, 

\noindent \small
\begin{equation}
    I(X;Y) = \frac{1}{k} \sum_{\tau = 1}^{k} I(X_{\tau};Y_{\tau}) ~, 
\end{equation}
\normalsize
where $I(X_{\tau};Y_{\tau})$ is the MI between the random variables $X_{\tau}$ and $Y_{\tau}$, and $k$ is the number of random variable pairs. This concept of average MI will be used in Section \ref{Sec:FTL-ESS_Theory} for the analysis of consumer privacy protection.

Given two random variables $X_{\tau}$ and $Y_{\tau}$, the MI between them is given by a function of their joint probability distribution function (PDF) $p_{X_{\tau},Y_{\tau}}$, and marginal distributions, $p_{X_{\tau}}$, and $p_{Y_{\tau}}$. These PDFs are typically unknown, and must be estimated. Assuming that multiple samples of $X_{\tau}$ and $Y_{\tau}$ are available, the PDFs can be estimated using the histogram method. Hence, only for the purpose of estimating these PDFs, assume that the protected and grid loads have finite support, \emph{i.e.}, $X_{\tau} \in \mathcal{X}_{\tau} := \{\bar{x}^1,\bar{x}^2,\cdots,\bar{x}^m\}$, and $Y_{\tau} \in \mathcal{Y}_{\tau} := \{\bar{y}^1,\bar{y}^2,\cdots,\bar{y}^n\}$. Then, the MI between $X_{\tau}$ and $Y_{\tau}$ can be given as 

\noindent \small 
\begin{equation}\label{eq:discMI}
    I(X_{\tau};Y_{\tau}) := \sum_{i=1}^{m} \sum_{j=1}^{n} p_{X_{\tau},Y_{\tau}}(\bar{x}_{}^{i},\bar{y}_{}^{j}) \log \frac{p_{X_{\tau},Y_{\tau}}(\bar{x}_{}^{i},\bar{y}_{}^{j})}{p_{X_{\tau}}(\bar{x}_{}^{i})p_{Y_{\tau}}(\bar{y}_{}^{j})},
\end{equation} 
\normalsize
where $p_A(a)$ denotes the probability of $A = a$, and $\log$ is the base-2 logarithm. For the rest of the paper, we further denote the realisations of the random variables with lowercase letters, $A^{\text{mean}}$ as the average value of $A$, $A^{\text{min}}$ as the minimum value that $A$ can take, and $A^{\text{max}}$ as the maximum value of $A$. 

As $X$ and $Y$ are continuous in reality, the PDF estimates become more accurate with an increase in $m$ and $n$; but this also requires more samples to prevent over-fitting. It follows that in order to minimise leakage of privacy-sensitive information, one needs to minimise the MI between the sensitive and grid loads. This can be done either through UDS or SMDM methods as described in Section \ref{Sec:intro}; and for UDS methods, using either ESSs, flexible consumer loads, or a combination thereof.

%%%%%%%%%%%%%%%%%%%%%%%%%%%%%%%%%%%%%%%%%%%%%%%%%%%%%%%%%%%%%%%%%%%%%
\section{Flexible Consumer Loads \\ and Consumer Privacy}\label{Sec:FCLandPrivacy}
The term ``flexible consumer loads" include thermal loads such as hot water heaters and space conditioning, schedulable loads such as clothes and dishwashers, and interruptible loads such as the charging of electric vehicles. From a privacy perspective, flexible consumer loads can broadly be classified into the following categories:
\begin{enumerate}[label=\alph*)]
    \item  Flexible consumer loads that are not privacy-sensitive, \emph{i.e.}, their usage does not reveal privacy-sensitive information about the consumer, nor are their presence in a household considered sensitive private information; e.g., electric space heaters within a house with high thermal inertia, in a community where their presence is the norm.
	\item  Flexible consumer loads that are privacy-sensitive with regards to their time-of-use, but not their presence in the household; e.g., electric stoves in a community where their presence is the norm. 
	\item  Flexible consumer loads that are privacy-sensitive, \emph{i.e.}, both their time-of-use and presence in a household reveal sensitive private information; e.g., electric stoves in a community where households typically cook with gas stoves.
\end{enumerate}
When controlling flexible consumer loads to shape a user's demand and reduce their information leakage, the privacy sensitivity of the loads themselves need to be considered. There are no privacy issues arising from their usage if the flexible consumer loads are of the first category. For loads of the second category, using them to mask the sensitive load inherently also masks the private information they reveal: their time-of-use is shifted and thus, the private information revealed by their original time-of-use is masked. However, if the flexible consumer loads are of the third category, then the privacy-protection problem also needs to consider whether the resulting grid load is able to mask the electrical signature of the flexible consumer loads, \emph{i.e.}, whether the sensitive load is able to sufficiently distort the signatures of the flexible consumer loads as well.

To simplify the analysis, we consider the use of flexible consumer loads within the first two categories in UDS privacy-protection schemes in this paper. Moreover, we limit our analysis to flexible thermal-based consumer loads (flexible thermal loads or FTLs) due to their ability to `store' thermal energy, and are more likely to be interruptible compared to other types of flexible consumer loads, such as washing machines that have minimum cycle times. Inductive FTLs, such as heat pumps, have complex on/off cycles and electrical signatures, making the analysis of their effectiveness in privacy protection complicated. Hence, in order to draw meaningful conclusions, we will focus on resistance-based FTLs, such as electric-resistance water heaters, and electric-resistance space heaters. In the next section, we will compare the theoretical performance of ESS-based UDS schemes against those using resistance-based FTLs.

%%%%%%%%%%%%%%%%%%%%%%%%%%%%%%%%%%%%%%%%%%%%%%%%%%%%%%%%%%%%%%%%%%%%%
\section{Comparing Privacy Protection using Energy Storage Systems and Flexible Thermal Loads}\label{Sec:FTL-ESS_Theory}
Setting aside the distinctive constraints of both ESSs and FTLs, the privacy protection afforded by them for UDS differs in one key aspect: ESSs are able to both charge and discharge, \emph{i.e.}, increase or decrease grid load; while traditional residential FTLs are only able to `charge', \emph{i.e.}, they can draw power from the grid, but typically cannot provide power back to the grid. 
% non-traditional residential FTLs, say thermal based energy storage with ability to provide power backup as well as thermal storage, e.g., molten salt, may also "discharge"

Let $H(\cdot) := - \sum p(\cdot) \log p(\cdot)$ be the Shannon entropy function, with $p(\cdot)$ being the probability of the variable and $H(\cdot)$ being minimal when the outcome is certain, and maximal when the underlying distribution is uniform. Additionally, assume that the following is true: 
\begin{enumerate}[label=(\alph*)]
    \item No energy wastage is permitted.
    \item The power ratings of the ESS and FTL are sufficiently large to compensate for the difference between the maximum and minimum consumer load, \emph{i.e.}, $P_{\text{ess}}^{\text{max}} , P_{\text{th}}^{\text{max}} \geq X^{\text{max}}-X^{\text{min}}$. 
    % Note that $Y^{\text{max}} = X^{\text{max}} + P_{\text{th}}^{\text{max}}$ for properly designed systems, but may not be true for ESSs as they might be designed to compensate the maximum grid load.
    \item The controller has perfect knowledge of the efficiency curves of the ESS and FTL.
    \item The controller has perfect knowledge of the consumer load $X$ and its average $X^{\text{mean}}$.\label{as:unreal1}
    \item The ESS has infinite energy storage capacity.\label{as:SysCap1}
    \item Either the FTL has infinite thermal storage capacity, or it holds, for the average electrical equivalent of the consumer thermal demand $D_{\text{th}}^{\text{mean}}$, that $D_{\text{th}}^{\text{mean}} \geq P_{\text{th}}^{\text{max}}$.\label{as:SysCap2}
    \item The FTL demand is continuous, \emph{i.e.}, it is not a step-load.\label{as:unreal2}
    \item Both ESS and FTL have an initial state-of-charge of $0.5$.
\end{enumerate} 
Using MI as the measure of privacy, the differences in achievable privacy protection by both technologies are discussed in the remainder of this section. 

\subsection{The Loads are Independent and Identically Distributed}
Let the random variable pair $(X,Y)$, and its marginals $X$ and $Y$ be independent and identically distributed (i.i.d.). Then, by definition, MI, $I_\text{iid}(X;Y)$ can also be written as a function of their Shannon entropies,

\noindent \small
\begin{alignat}{2}
I_\text{iid}(X;Y) 
&= &&\, H(X) + H(Y) - H(X,Y) \label{FTLeq:MIEntropy} \\
&= &&\, H(Y) - H(Y|X) ~. \label{FTLeq:MIEntropy2}
\end{alignat}
\normalsize
From \eqref{FTLeq:MIEntropy}, it is trivial to see that $I_\text{iid}(X;Y)$ can be minimised by either maximising $H(X,Y)$, assuming $H(X,Y)$ increases at the same or higher rate than $H(Y)$; or by minimising $H(Y)$, assuming $H(Y)$ decreases at the same or higher rate than $H(X,Y)$. Note that as $H(X)$ is fixed, $H(Y)$ and $H(X,Y)$ are affected similarly (either both increase or both decrease) by a given control action. Moreover, we have the following propositions:

\begin{prop} \label{Prop:MinMI}
	$I_\text{iid}(X;Y)$ is minimal when $H(Y)$ is minimal, \emph{i.e.}, when $|\mathcal{Y}' := \{y \in \mathcal{Y} ~|~ p_Y(y) >0 \} |$ is minimal.
\end{prop}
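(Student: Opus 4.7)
The natural starting point is the alternative expression $I(X;Y) = H(Y) - H(Y|X)$ from \eqref{FTLeq:MIEntropy2}. Since conditional entropy is non-negative, this already yields the upper bound $I_\text{iid}(X;Y) \leq H(Y)$, so any reduction in $H(Y)$ immediately caps the MI from above. Combined with the lower bound $I_\text{iid}(X;Y) \geq 0$, this squeezes the MI to zero whenever $H(Y)$ itself can be driven to zero.

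To translate the entropy statement into a support-cardinality one, I would invoke two standard facts about Shannon entropy: for any distribution with support $\mathcal{Y}'$, one has $0 \leq H(Y) \leq \log|\mathcal{Y}'|$, with the left equality holding if and only if $Y$ is supported on a single point, and the right equality holding if and only if $Y$ is uniform on $\mathcal{Y}'$. In particular, $H(Y) = 0$ precisely when $|\mathcal{Y}'| = 1$, which supplies the ``\emph{i.e.}'' in the proposition. Moreover, when $Y$ takes only one value it is constant almost surely, hence independent of $X$, so straight from the definition of MI one obtains $I_\text{iid}(X;Y) = 0$, attaining the lower bound. Chaining these implications shows that minimising $|\mathcal{Y}'|$ minimises $H(Y)$, which in turn minimises $I_\text{iid}(X;Y)$, as claimed.

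Finally, I would note that $|\mathcal{Y}'| = 1$ is actually feasible under the standing assumptions: by (b), (d), (e) and (f) the controller has enough power, storage, and prior knowledge of $X^{\text{mean}}$ to absorb the fluctuations $X - X^{\text{mean}}$ into $S$ and flatten $Y$ to a single constant value. The main subtlety I expect to have to handle carefully is the phrasing of the ``\emph{i.e.}'' itself — for general distributions, small entropy and small support are not equivalent, since a highly concentrated distribution on a large support can have tiny entropy — so the argument really hinges on the fact that the extremal value $H(Y) = 0$ is attainable here, rather than on any blanket monotone correspondence between $|\mathcal{Y}'|$ and $H(Y)$.
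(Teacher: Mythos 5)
Your proof is correct, but it takes a genuinely different route from the paper's. The paper argues from the joint-entropy decomposition $I_\text{iid}(X;Y) = H(X)+H(Y)-H(X,Y)$ of \eqref{FTLeq:MIEntropy}: it observes that $H(X)$ is fixed and positive and that $H(X,Y)$ is ``limited by'' the given non-uniform $p_X$, and concludes by elimination that the minimum of the MI must be reached by driving $H(Y)$ down rather than by driving $H(X,Y)$ up. You instead work from \eqref{FTLeq:MIEntropy2}, using $H(Y|X)\geq 0$ to obtain the sandwich $0 \leq I_\text{iid}(X;Y) \leq H(Y)$, and then show the upper bound can be driven to zero because a constant grid load ($|\mathcal{Y}'|=1$, hence $H(Y)=0$) is feasible under assumptions (b), (d), (e), (f). Your argument is the more rigorous of the two: it gives an explicit chain of implications terminating at the global minimum $I_\text{iid}(X;Y)=0$, whereas the paper's ``limited by $p_X$'' step is informal and does not by itself rule out that maximising $H(X,Y)$ could also attain zero MI (indeed Proposition \ref{Prop:MinMI2} shows it can, for a device able to discharge). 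Your closing caveat about the ``\emph{i.e.}'' is also well taken: minimal support and minimal entropy are not interchangeable in general, and the identification only works here because the extremal point $|\mathcal{Y}'|=1$, $H(Y)=0$ is attainable under the ideal assumptions — a subtlety the paper's proof does not address. What the paper's route buys in exchange is the explicit contrast between the two mechanisms for reducing MI, which directly sets up the subsequent argument that FTLs, being unable to discharge, are confined to the minimise-$H(Y)$ mechanism.
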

\begin{proof} \label{Proof:MinMI}
	The distribution of the sensitive load $p_X(x)$ is uncontrollable, non-uniform, and the number of outcomes with non-zero probability is non-singular, \emph{i.e.}, $|\mathcal{X}' := \{x \in \mathcal{X} ~|~ p_X(x) >0 \} | > 1$. Hence, $H(X)$ is greater than zero. Since $p_{X,Y}(x,y)$ is non-uniform, as $p_X(x)$ is non-uniform, $H(X,Y)$ is limited by the given $p_X(x)$. Therefore, it follows that $I_\text{iid}(X;Y)$ is minimal when $H(Y)$ is minimal, \emph{i.e.}, when $|\mathcal{Y}'|$ is minimal, where $\mathcal{Y}' := \{y \in \mathcal{Y} ~|~ p_Y(y) >0 \}$, instead of when $H(X,Y)$ is maximal. 
\end{proof}

\begin{prop} \label{Prop:MinMI2}
	$I_\text{iid}(X;Y)$ is minimal when $H(Y|X)$ is maximal.
\end{prop}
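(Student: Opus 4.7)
The plan is to build directly on the entropy decomposition \eqref{FTLeq:MIEntropy2}, namely $I_\text{iid}(X;Y) = H(Y) - H(Y|X)$, and invoke the standard information-theoretic bound that conditioning cannot increase entropy, $H(Y|X) \leq H(Y)$, with equality if and only if $X$ and $Y$ are independent. This bound, together with non-negativity of mutual information, pins down the range in which the gap $H(Y) - H(Y|X)$ can live.

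First, I would note that since the consumer load distribution $p_X$ is fixed and uncontrollable, $H(X)$ is a constant of the problem. Using $H(X,Y) = H(X) + H(Y|X)$, maximising $H(Y|X)$ is therefore equivalent to maximising the joint entropy $H(X,Y)$, which connects this proposition directly to the complementary statement in Proposition~\ref{Prop:MinMI}. Second, I would argue that for any admissible control action, the bound $H(Y|X) \leq H(Y)$ implies that $I_\text{iid}(X;Y)$ shrinks monotonically as $H(Y|X)$ approaches its envelope $H(Y)$, and vanishes precisely at $H(Y|X) = H(Y)$, i.e., when $X$ and $Y$ are independent. Hence the global minimum of MI is attained exactly at the operating point where $H(Y|X)$ reaches its maximum admissible value.

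The hard part, and the point on which the proof must be careful, is that $H(Y)$ is itself a function of the control action rather than a static reference, so ``maximising $H(Y|X)$'' cannot simply mean pushing $H(Y|X)$ upward in isolation. The resolution is to phrase the statement relatively: MI is minimal when $H(Y|X)$ is maximal \emph{with respect to its upper envelope} $H(Y)$ under the prevailing control policy. With this framing, the proposition is a direct consequence of \eqref{FTLeq:MIEntropy2} and the conditioning inequality, and it describes exactly the same optimum as Proposition~\ref{Prop:MinMI}, merely viewed from the complementary side: rather than compressing $H(Y)$ down toward $H(Y|X)$, one lifts $H(Y|X)$ up toward $H(Y)$, and either viewpoint characterises the same independent-$(X,Y)$ regime in which $I_\text{iid}(X;Y) = 0$.
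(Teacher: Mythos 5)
There is a genuine gap here. The proposition, as the paper uses it, is a substantive claim about the \emph{absolute} maximiser of $H(Y|X)$ over admissible joint distributions with $p_X$ fixed: the paper's proof observes that $H(Y|X)=\sum_x p_X(x)H(Y|X=x)$ attains its maximum $\log|\mathcal{Y}|$ exactly when every conditional $p_{Y|X}(\cdot\,|\,x)$ is uniform on $\mathcal{Y}$, and then notes that the marginal $p_Y=\sum_x p_{Y|X}(\cdot\,|\,x)p_X(x)$, being a mixture of identical uniform distributions, is itself uniform, so $H(Y)=\log|\mathcal{Y}|=H(Y|X)$ and $I_\text{iid}(X;Y)=0$. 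That characterisation of the maximiser is the whole content of the proposition --- it is what the paper later invokes to argue that FTLs, which cannot produce uniform conditionals because they cannot reduce the grid load, cannot reach this optimum. Your proposal never establishes it. You correctly identify the difficulty (that $H(Y)$ moves with the control action), but you resolve it by redefining ``maximal'' to mean ``equal to its upper envelope $H(Y)$,'' which reduces the proposition to the tautology $I=0 \iff H(Y|X)=H(Y)$, i.e., a restatement of \eqref{FTLeq:MIEntropy2} rather than a proof that the genuine maximiser of $H(Y|X)$ lands at that point.

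Two of your intermediate claims are also shaky. The assertion that $I_\text{iid}(X;Y)$ ``shrinks monotonically as $H(Y|X)$ approaches its envelope'' does not follow from $H(Y|X)\le H(Y)$: since both terms vary with the policy, the gap $H(Y)-H(Y|X)$ need not be monotone in $H(Y|X)$, and one can construct policies with larger $H(Y|X)$ but also larger MI. And while the equivalence of maximising $H(Y|X)$ and maximising $H(X,Y)$ (via $H(X,Y)=H(X)+H(Y|X)$ with $H(X)$ fixed) is correct and a nice link to Proposition~\ref{Prop:MinMI}, it does not by itself locate the maximiser. The fix is short: bound each term $H(Y|X=x)\le\log|\mathcal{Y}|$ with equality iff the conditional is uniform, then deduce uniformity of $p_Y$ and hence $H(Y)=H(Y|X)$ at the maximiser, which is exactly the paper's argument.
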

\begin{proof} \label{Proof:MinMI2}
	Given a fixed and non-uniform sensitive load distribution $p_X(x)$, $H(Y|X)$ is maximal when $p_{Y|X}(y|x)$ are uniform distributions for each value of $x$. Since $p_{Y}(y) = \sum_\mathcal{X} p_{X,Y}(x,y) = \sum_\mathcal{X} p_{Y|X}(y|x)p_X(x)$, it is also a uniform distribution when $p_{Y|X}(y|x)$ are uniform distributions for each value of $x$. Therefore, $H(Y) = H(Y|X)$ when $H(Y|X)$ is maximal, and $I_\text{iid}(X;Y) = 0$, which is its minimal value. % not strictly correct, it should either be an integral or \bar{x} and \bar{y}.
\end{proof}
% H(X,Y) cannot be maximal as it is limited by H(X), BUT it can be maximised.

As FTLs cannot `discharge' (reduce the grid load), and therefore, cannot achieve a uniform distribution for $p_{Y|X}(y|x)$, the following analysis is based on Proposition \ref{Prop:MinMI}. It is trivial to see that perfect privacy, $I_\text{iid}(X;Y) = 0$ can be achieved by maintaining a constant grid load, $y^*$, where $p_Y(y^*) = 1~, \text{and } p_Y(y) = 0 ~\forall y \neq y^*$. Let the grid load achieved using the ESS be denoted by $Y_{\text{ess}}$ and that of FTL by $Y_{\text{th}}$, then there exists $y^*_{\text{ess}}$ and $y^*_{\text{th}}$ such that $I_\text{iid}(X;Y^*_{\text{ess}}) = I_\text{iid}(X;Y^{*}_{\text{th}}) = 0$. While $y_{\text{ess}}^*$ can be any arbitrary value $Y^{\text{min}} \leq y_{\text{ess}}^* \leq Y^{\text{max}}$, there is less flexibility for $y_{\text{th}}^*$, with $y_{\text{th}}^* \geq X^{\text{max}}$. Nonetheless, the theoretical maximum privacy can be achieved by both technologies given ideal assumptions.

In reality, storage capacity is finite, and for most consumers, it would be unreasonable to assume that the system is undersized, \emph{i.e}, $D_{\text{th}}^{\text{mean}} \geq P_{\text{th}}^{\text{max}}$, where $D_{\text{th}}^{\text{mean}}$ is the electrical equivalent of the average power consumption required in order to maintain consumer comfort. Therefore, assumptions \ref{as:SysCap1} and \ref{as:SysCap2} are made more realistic such that the storage capacity is finite, but sufficiently large to average out consumer load (or thermal demand) over a finite period of time. Additionally, average thermal demand is now assumed to be large, but less than the FTL power rating and that $D_{\text{th}}^{\text{mean}} + X^{\text{mean}} < X^{\text{max}}$. For ESSs, the controller would now need to select a constant grid load such that $y_{\text{ess}}^* = X^{\text{mean}} + l_{\text{ess}}$, where $l_{\text{ess}}$ is the round trip loss of the ESS. This allows a constant $y^*_{\text{ess}}$ that does not empty or fully charge the ESS. As it would be possible to sustain $y^*_{\text{ess}}$ indefinitely, $I_\text{iid}(X;Y_{\text{ess}}) = I_\text{iid}(X;Y^*_{\text{ess}}) = 0$. For FTLs, it follows that $y_{\text{th}}^* = D_{\text{th}}^{\text{mean}} + X^{\text{mean}}$, and that $y_{\text{th}}^* < X^{\text{max}}$ in most realistic cases. Assume that $I_\text{iid}(X;Y_{\text{th}})$ is still minimised by actuating $y_{\text{th}}^*$ whenever possible. In this case, we now also have $y = x \neq y_{\text{th}}^* ~, \forall x > y_{\text{th}}^*$. Let $k$ be the total number of samples, and $g(k)$ be the number of instances where $x > y_{\text{th}}^*$, then 

\vspace{-0.3cm}\small
\begin{subequations}
\begin{alignat}{2}
    I_\text{iid}(X;Y_{\text{th}}) &= \,&&\frac{k-g(k)}{k} I(X;Y^*_{\text{th}}) + \frac{g(k)}{k} H(X)\\
                &= \,&& \frac{g(k)}{k} H(X) ~, \label{eq:FTLPrivacy}
\end{alignat}
\end{subequations}
\normalsize
where \eqref{eq:FTLPrivacy} follows from the fact that $I_\text{iid}(X;Y^*_{\text{th}}) = 0$ and $y = x ~, \forall x > y_{\text{th}}^*$. Thus, $I_\text{iid}(X;Y_{\text{ess}}) < I_\text{iid}(X;Y_{\text{th}})$ as $H(X) > 0$, \emph{i.e.}, privacy loss using FTLs for UDS schemes is, under the given assumptions on equivalent storage size, greater than those using ESSs given these assumptions. 

In theory, MI can also be minimised by maximising $H(Y|X)$ if $|\mathcal{Y}' := \{y \in \mathcal{Y} ~|~ p_Y(y) >0 \} |$ is too large, \emph{i.e}, when there are too many different values of $x > y_\text{th}^*$. However, given that FTLs cannot `discharge', $p_{Y|X}(y|x)$ cannot be uniform distributions. Therefore, $H(Y)-H(Y|X) > 0$, as $H(Y) \neq H(Y|X)$ in this case, again, resulting in $I_\text{iid}(X;Y^*_\text{th}) > I_\text{iid}(X;Y^*_\text{ess})$.

\subsection{The Loads are First-Order Markov Processes}
The random variables $(X,Y)$, $X$, and $Y$ are not i.i.d. in reality, and could be better modelled using first-order Markov processes \cite{McLoughlin2010}, of which the MI, $I_\text{m}(X;Y)$ \cite{Tan2017} is given by

\noindent \small
\begin{alignat}{2}
I_\text{m}(X;Y) 
&= ~&& \frac{1}{k} \left[ \sum_{\tau=2}^{k} I(X_{\tau},X_{\tau-1};Y_{\tau},Y_{\tau-1}) - \right. \notag\\ 
&  ~&& \quad\quad\quad\quad\quad \left. \sum_{\tau=3}^{k} I(X_{\tau-1},Y_{\tau-1})\right]~. \label{eq:MarkovMI}
\end{alignat} \normalsize
Expressing \eqref{eq:MarkovMI} in terms of entropy,

\vspace{-0.3cm}\small \begin{alignat*}{2}
I_\text{m}(X;Y) 
&= ~&& \frac{1}{k}  \bigg\{ H(X_2,X_1)+H(Y_2,Y_1)-\\
&  ~&& H(X_2,X_1,Y_2,Y_1) + \sum_{\tau=3}^{k} \Big[ H(X_{\tau},X_{\tau-1}) +\\
&  ~&& H(Y_{\tau},Y_{\tau-1}) - H(X_{\tau},X_{\tau-1},Y_{\tau},Y_{\tau-1}) -\\
&  ~&& H(X_{\tau-1}) -  H(Y_{\tau-1}) + H(X_{\tau-1},Y_{\tau-1}) \Big] \bigg\} ~.
\end{alignat*} \normalsize
Note that if the random variables $(X,Y)$, $X$, and $Y$ are higher-order Markov processes, then \eqref{eq:MarkovMI} forms the upper bound on the actual MI \cite{Tan2017}. As $k \rightarrow \infty$, 

\vspace{-0.3cm}\small \begin{alignat*}{2}
I_\text{m}(X;Y) 
&\approx ~&& \frac{1}{k}  \bigg\{ \sum_{\tau=3}^{k} \Big[ H(X_{\tau},X_{\tau-1}) + H(Y_{\tau},Y_{\tau-1}) -\\
&  ~&& \qquad\quad H(X_{\tau},X_{\tau-1},Y_{\tau},Y_{\tau-1}) - H(X_{\tau-1}) -\\
&  ~&& \qquad\quad H(Y_{\tau-1}) + H(X_{\tau-1},Y_{\tau-1}) \Big] \bigg\} ~.
\end{alignat*} \normalsize

It is trivial to see that Proposition \ref{Prop:MinMI} still holds, and that $I_\text{m}(X;Y)$ is minimal when the entropy of $Y$ is minimal. Moreover, when assumptions (a) to (g) hold, then both $I_\text{m}(X;Y^*_{\text{ess}})$ and $I_\text{m}(X;Y^*_{\text{th}})$ are minimal and equal to zero. Now, assume that the Markov processes $(X,Y)$, $X$, and $Y$ are also stationary, \emph{i.e.}, $H(X_1) = H(X_2) =\cdots = H(X_k)$, $H(Y_1) =H(Y_2)=\cdots=H(Y_k)$, $H(X_1,X_2) = H(X_2,X_3)$ $=\cdots = H(X_{k-1},X_k)$, $H(Y_1,Y_2) = H(Y_2,Y_3) =\cdots = H(Y_{k-1},Y_k)$, and that assumptions \ref{as:SysCap1} and \ref{as:SysCap2} are made more realistic as in the i.i.d. case. Then, $I_\text{m}(X;Y_{\text{ess}}) = I_\text{m}(X;Y_{\text{ess}}^*) = 0$, while 

\noindent \small 
\begin{alignat*}{2}
I_\text{m}(X;Y_{\text{th}}) 
&\approx ~&& \frac{g_1(k)}{k} \cdot 0 ~+ \frac{g_2(k)}{k} H(X_{\tau}) ~+ \\
&        ~&& \qquad \frac{g_3(k)}{k} \Big[ H(X_{\tau},X_{\tau-1}) - H(X_{\tau}) \Big] \\ 
&=       ~&& \frac{g_2(k) - g_3(k)}{k} H(X_{\tau}) ~+\\
&        ~&& \frac{g_3(k)}{k} H(X_{\tau},X_{\tau-1}) \,, \quad \tau \in \{2,3,\cdots,k\} ~,
\end{alignat*} \normalsize
where the function $g_1(k)$ gives the number of instances where $(y_{\text{th},\tau}=y_{\text{th},\tau-1}=y^*_{\text{th}})$ or $(y_{\text{th},\tau}=y^*_{\text{th}}, ~y_{\text{th},\tau-1}=x_{\text{th},\tau-1})$, $g_2(k)$ is the number of instances where $(y_{\text{th},\tau}=x_{\text{th},\tau}, ~y_{\text{th},\tau-1}=y^*_{\text{th}})$, $g_3(k)$ is the number of instances where $(y_{\text{th},\tau}=x_{\text{th},\tau}, ~y_{\text{th},\tau-1}=x_{\text{th},\tau-1})$, and $g_1(k)+g_2(k)+g_3(k) = k-2$ \cite{Tan2017}. As $H(X_{\tau}) > 0$, $H(X_{\tau},X_{\tau-1}) > 0$, and $ H(X_{\tau},X_{\tau-1}) > H(X_{\tau})$ (because $X_{\tau}$ and $X_{\tau-1}$ are not perfectly correlated), therefore, $I_\text{m}(X;Y_{\text{th}}) > I_\text{m}(X;Y_{\text{ess}})$. 

\subsection{Privacy Protection for Actual Systems}
For actual systems, the load distributions vary according to the consumer household's state, and their characterisation is the subject of much research. Despite this, consumer privacy is protected if one can achieve a flat grid load that has zero entropy, \emph{i.e.}, zero MI between the sensitive and grid loads. 
While assumptions \ref{as:unreal1} and \ref{as:unreal2} do not hold in reality, it would be possible to implement systems with sufficient storage capacity to average out consumer load (or thermal demand). For ESS-based schemes, one would be able to select $y_{\text{ess}}$ close to $y^*_{\text{ess}}$, given a sufficiently large sample size, as the accuracy of the consumer load sample mean $\hat{X}^{\text{mean}} \rightarrow X^{\text{mean}}$ as $k \rightarrow \infty$. In addition to $X^{\text{mean}}$, the achievable privacy protection of FTL-based UDS schemes is also dependent on $D_{\text{th}}^{\text{mean}}$ and the ratio of $X^{\text{max}}$ to $X^{\text{mean}}$, which are usually fixed and directly affect the number of instances when $y_{\text{th}} = y^*_{\text{th}}$. Note that a larger $X^{\text{max}}$ to $X^{\text{mean}}$ ratio would require a larger $D_{\text{th}}^{\text{mean}}$ to achieve the same level of privacy protection and vice versa. It would be difficult to compare the performance of actual ESS and FTL-based UDS privacy protection schemes, especially since there is a lot of uncertainty in the system parameters for FTLs. Even so, given the analysis above, the additional dependencies of FTL-based schemes (stochastic thermal demand and dependencies on the ambient environment), and the fact that most FTLs are step-loads, properly designed ESS-based schemes should outperform their FTL-based counterparts. 

%%%%%%%%%%%%%%%%%%%%%%%%%%%%%%%%%%%%%%%%%%%%%%%%%%%%%%%%%%%%%%%%%%%%%
\section{Formulation of the Optimisation Problem for Numerical Experiments}\label{Sec:ControllerSetup}
Our goal is to compare the performance of privacy protection using ESSs and FTLs in realistic systems to validate our findings from the previous theoretical analysis by simulating a multi-objective model-predictive control-based HEMS controller. For FTLs, we analyse the use of electric hot water heaters (EWHs) and electric resistance space heaters (ERHs), as they better match the analysis in Section \ref{Sec:FTL-ESS_Theory} compared to other FTL types. In this section, the modelling of the ESS and FTLs, the formulation of the privacy objective, and the overall optimisation problems used in the HEMS controllers are presented.

\subsection{Privacy Objective}
There are numerous privacy measures in use across the many different privacy protection schemes available, as briefly summarised by the authors of \cite{Giaconi2018a,Arzamasov2020}. To better match the analysis in Section \ref{Sec:FTL-ESS_Theory}, we adopt a privacy objective function that directly minimises an approximation of \eqref{eq:discMI}. This MI approximate, as proposed in \cite{Chin2017}, assumes that $X$ and $Y$ are i.i.d., and is given by:

\noindent \small 
\begin{alignat}{2} 
I(X;Y)
&\approx   &&\,\,\tilde{I}(X_w;Y_w) \notag \\
&:=      &&\,\sum_{i=1}^m\sum_{j=1}^n \left( a^{ij}_w + \frac{1}{N_{\varepsilon}}\sum_{\tau=w}^{w+W} z^{ij}_{\tau} \right) \times \notag \\
&         &&\left\{ \log \frac{a^{ij}_w}{b^j_w c^i_w} + \frac{\nu}{a^{ij}_w N_{\varepsilon}}\sum_{\tau=w}^{w+W} z^{ij}_{\tau} \right. - \notag \\
&         && \quad\quad\quad\quad\quad  \left. \frac{\nu}{b^j_{\tau} N_{\varepsilon}}\sum_{\tau=w}^{w+W} \sum_{h=1}^m z^{hj}_{\tau} \right\}, \label{eq:linlog}
\end{alignat}\normalsize
at time $w$, where $W+1$ is the prediction horizon, $a^{ij}_w$, $b^j_w$, and $c^i_w$ are constants used in the estimation of the PDFs $p_{X,Y}$, $p_X$ and $p_Y$,  $N_{\varepsilon}$ is the total number of observations used in the estimate, including an additive smoothing constant, $\nu := 1/\log_e 2$, and $z_{\tau}^{ij} \in \{0,1\}$ are binary variables used to estimate the PDFs; see \cite{Chin2017} for details on its derivation. While this MI approximate was shown to directly minimise the MI between $X$ and $Y$, its scalability is limited by the number of binary variables $z_{\tau}^{ij}$, which increases with the prediction horizon length and quantisation levels of $X$ and $Y$. Hence, we relax binary variables $z^{ij}_{\tau}$, \emph{i.e.}, let $z_{\tau}^{ij} \in [0,1]$, in order to make \eqref{eq:linlog} a convex function, and overcome the scalability issues identified in \cite{Chin2017}. This relaxation affects the performance of the controller in terms of minimising MI, but this is outside the scope of this paper. The following constraints are required in the optimisation of \eqref{eq:linlog}:

\noindent \small 
\begin{alignat}{2}
&\sum_{j=1}^{n} z^{i^*j}_{\tau} = 1 ~&& \label{eq:MIstart}\\
& z^{ij}_{\tau} = 0 ~&&, \forall~ i \neq i^*\\
&\sum_{j=1}^{n} z^{i^*j}_{\tau} \bar{y}^{j-1} \leq ~y_{\tau} < \sum_{j=1}^{n} z^{i^*j}_{\tau} \bar{y}^{j}  ~ &&, \label{eq:ZtoY}
\end{alignat} \normalsize
where $i^*$ is the index corresponding to the given value of $x_{\tau}$, $\bar{y}^0 = Y^{\text{min}}$, $\bar{y}^n = Y^{\text{max}}$, and constraint \eqref{eq:ZtoY} links the grid load to its PDF estimate and thus, the MI approximate. 

\subsection{Modelling of an ESS}
Two variables, $P_{c}$ and $P_{d}$, are used to model the instantaneous charging and discharging powers of the ESS, respectively, in order to capture the different loss factors during charge and discharge. Additionally, a binary variable $B_{\textit{ess}}$ is introduced to prevent the simultaneous charging and discharging of the ESS, necessitated by the fact that this is optimal at some time instances due to the privacy objective. While it would be ideal to have a realistic and convex ESS model, its derivation remains an ongoing area of research. Let $E_{\tau}$ be the energy remaining in the ESS at time $\tau$. Then, the following constraints are used to model the ESS in the optimisation problem:

\noindent \small
\begin{alignat}{2}
    &0 \leq P_{c,\tau} \leq B_{\textit{ess},\tau} P^{\text{max}}_{c} &&~ \label{eq:ESSstart}\\
    &0 \leq P_{d,\tau} \leq (1-B_{\textit{ess},\tau}) P^{\text{max}}_{d} &&~\\
    &0 \leq E_{\tau} \leq E^{\text{max}} &&~\\
    &E_{\tau+1} = E_{\tau} + \Delta_t (\eta_{c} P_{c,\tau} - \eta_{d} P_{d,\tau}) &&~\\
    &S_{\tau} = P_{c,\tau} - P_{d,\tau} &&~, \label{eq:ESSend}
\end{alignat} \normalsize
where $\eta_{c}$ and $\eta_{d}$ are the charging and discharging efficiencies of the ESS, respectively, and $\Delta_t$ is the interval of $\tau$.

\subsection{Modelling of an Electric Hot Water Heater}
The thermodynamics in a hot water tank can be modelled by splitting the tank into several sections (nodes). A two-node EWH model proposed in \cite{Jin2014} is adopted in order to better capture the thermodynamics of a real device. As the original model was developed for an electric heat pump, we modify it by replacing the coefficient of performance (COP) with one. Also, we assume a temperature dead-band of $1^{\circ}$C around the temperature set-point. This water heater model is given by the following constraints in the optimisation problem:

\noindent\small
\begin{alignat}{2}
{T}^{low}_{\textit{ewh},\tau+1} &= &&~{T}^{low}_{\textit{ewh},\tau} + \frac{\Delta_t}{C^{low}_{\textit{ewh}}} \left[ \textit{UA}^{low}_{\textit{ewh}} \left(T^{in}_{air,\tau}  - T^{low}_{\textit{ewh},\tau}\right) +   \right. \notag \\
&~  && \left. \Delta m_{hw,\tau} C_p \left(T_{ms} - T^{low}_{\textit{ewh},\tau}\right) + P^{\text{max}}_{\textit{ewh}} U^{low}_{\textit{ewh},\tau} \right] \label{eq:EWHstart}\\
{T}^{up}_{\textit{ewh},\tau+1} &= &&~ {T}^{low}_{\textit{ewh},\tau} + \frac{\Delta_t}{C^{up}_{\textit{ewh}}} \left[ \textit{UA}^{up}_{\textit{ewh}} \left(T^{in}_{air,\tau} - T^{up}_{\textit{ewh},\tau}\right) +   \right. \notag \\
&~  &&~ \left.  \Delta m_{hw,\tau} C_p \left(T^{low}_{\textit{ewh},\tau} ~ - T^{up}_{\textit{ewh},\tau}\right) + P^{\text{max}}_{\textit{ewh}} U^{up}_{\textit{ewh},\tau} \right]
\end{alignat}
\vspace{-0.5cm} 
\begin{align}
& T^{absmin}_{\textit{ewh}} \leq ~T^{up}_{\textit{ewh},\tau} \leq ~T^{absmax}_{\textit{ewh}} \\
& T^{low}_{\textit{ewh},\tau} \leq ~T^{up}_{\textit{ewh},\tau} \\
& U^{low}_{\textit{ewh},\tau} ~+ ~U^{up}_{\textit{ewh},\tau} \leq ~1\\
& S_{\tau} =  \Delta_t(P^{\text{max}}_{\textit{ewh}} U^{low}_{\textit{ewh},\tau} + P^{\text{max}}_{\textit{ewh}} U^{up}_{\textit{ewh},\tau})
\end{align}
\normalsize
where superscripts $low$ and $up$ represent the values for the lower and upper nodes of the tank, respectively. ${T}_{\textit{ewh},\tau}$ is the water temperature of the node, $T^{in}_{air,\tau}$ is the indoor air temperature, $\Delta m_{hw,\tau}$ is the hot water draw, and $U_{\textit{ewh},\tau} \in [0,1]$ is the duty cycle of the EWH tank node at time $\tau$. Also, $C_{\textit{ewh}}$ is the thermal capacitance of the tank node, $\textit{UA}_{\textit{ewh}}$ is the heat loss coefficient of the node, $C_p$ is the heat capacity of water, $T_{ms}$ is the mains water temperature, $P^{\text{max}}_{\textit{ewh}}$ is the rated power of the EWH, $T^{absmin}_{\textit{ewh}}$ is the minimum water temperature required for safety (to mitigate Legionella bacterium growth in pipework), and $T^{absmax}_{\textit{ewh}}$ is the maximum permissible water temperature of the EWH.  Furthermore, to take into account consumer comfort, variables $z^{\textit{comf}}_{\tau} \in \mathbb{R}_{\ge0}$ with constraints:

\noindent\small
\begin{align}
\left(T^{set}_{\textit{ewh}} - 1^\circ\text{C} \right) - T^{low}_{\textit{ewh},\tau} &\leq ~z^{\textit{comf}}_{\tau}\\
T^{up}_{\textit{ewh},\tau} - \left(T^{set}_{\textit{ewh}} + 1^\circ\text{C} \right) &\leq ~z^{\textit{comf}}_{\tau} ~, \label{eq:EWHend}
\end{align} 
\normalsize
are introduced to penalise deviations from consumer set-points for the EWH water $T^{set}_{\textit{ewh}}$.

\subsection{Modelling an Electric Resistance Space Heater}
To model the dynamics of the space heating system, a data-driven model proposed in \cite{Jin2017} is adopted. Similarly, we replace the coefficient of performance (COP) with one, to match the resistance-based ERH. The model coefficients are derived by using statistical learning on data recorded from actual heating systems. The following constraint captures the dynamics of the system: 

\noindent\small
\begin{alignat}{2}
{T}^{in}_{\textit{air},\tau+1} &= &&~{T}^{in}_{\textit{air},\tau} + \gamma_1  \big({T}^{out}_{\textit{air},\tau} -  {T}^{in}_{\textit{air},\tau} \big) + \gamma_2 \big(U_{\textit{erh},\tau} P^{\text{max}}_{\textit{erh}}\big) + \gamma_3 P_{\textit{irr},\tau} \label{eq:ERHstart}
\end{alignat}
\normalsize
where $\gamma_1, ~\gamma_2$, and $\gamma_3$ are parameters learned from data, ${T}^{in}_{\textit{air},\tau}$ and ${T}^{out}_{\textit{air},\tau}$ are the indoor and outdoor temperatures at time $\tau$, respectively, $U_{\textit{erh},\tau} \in [0,1]$ is the ERH duty cycle, $P_{\textit{irr},\tau}$ is the solar irradiance at time $\tau$ ,and $P^{\text{max}}_{\textit{erh}}$ is the rated power of the ERH. Similar to the EWH, the proxy comfort variables $z^{\textit{comf}}_{\tau}$ are used to penalise deviations from consumer set-points. However, as deviations in indoor temperature affect consumer comfort to a higher degree than hot water temperatures, deviations (per $^\circ$C) are penalised with a larger coefficient:

\noindent \small
\begin{alignat}{2}
10\left[\left(T^{set}_{\textit{erh}} - 1^\circ\text{C} \right) - T^{in}_{\textit{air},\tau}\right] &\leq ~z^{\textit{comf}}_{\tau}\\
10\left[T^{in}_{\textit{air},\tau} - \left(T^{set}_{\textit{erh}} + 1^\circ\text{C} \right)\right] &\leq ~z^{\textit{comf}}_{\tau} ~, \label{eq:ERHend}
\end{alignat}
\normalsize
where $T^{set}_{\textit{erh}}$ is the consumer indoor temperature set-point.

\subsection{Optimisation Problem for an ESS-based HEMS Controller}
For an ESS-based HEMS controller, the following objective function is used:

\noindent \small
\begin{equation}
\begin{split} \label{eq:ESSOpt}
& \underset{y_{\tau},z^{ij}_{\tau}}{\mbox{minimise}} \quad \frac{1}{W+1}\sum_{\tau=w}^{w+W} c_{\tau}y_{\tau} + \mu_{w} \tilI(X_{w};Y_{w}) \\
& \mbox{subject to} \quad  (y_{\tau}, ~z^{ij}_{\tau}) \in \mathcal{F}_{\textit{ess},\tau} ~, 
\end{split}
\end{equation}
\normalsize
where $c_{\tau}$ is the cost of energy, $\mu_{w}$ is the price-of-privacy-loss, and the set $\mathcal{F}_{\textit{ess},\tau}$ enforces constraints \eqref{eq:powerbalance}, and \eqref{eq:MIstart} to \eqref{eq:ESSend}.

The inclusion of the energy costs penalises the charging of the ESS during high-price periods, and when coupled with lower prices-of-privacy-loss, discourages multiple charge-discharge cycles within a day. This allows a better comparison with FTL-based systems, which cannot `discharge', and hence have equivalent energy storage capacities limited by the average daily thermal demand and system losses. 

\subsection{Optimisation Problems for FTL-based HEMS Controllers}
In addition to the energy costs, the optimisation objective for FTLs should also minimise consumer comfort violations. We minimise $\|\mathbf{z}^{\textit{comf}}\|_2^2,~ \mathbf{z}^{\textit{comf}} := [{z}^{\textit{comf}}_{w},{z}^{\textit{comf}}_{w+1},\cdots,{z}^{\textit{comf}}_{w+W}]^\top $, which imposes larger penalties for larger comfort violations. Thus, the optimisation problem for an EWH-based HEMS controller is given by

\noindent \small
\begin{equation}
\begin{split} \label{eq:FTLOpt}
& \underset{y_{\tau},z^{ij}_{\tau},\mathbf{z}^{\textit{comf}}}{\mbox{minimise}} \quad \frac{1}{W+1}\sum_{\tau=w}^{w+W} c_{\tau}y_{\tau} + \mu_{w} \tilI(X_{w};Y_{w}) + \rho_{w} \|\mathbf{z}^{\textit{comf}}\|_2^2\\
& \mbox{subject to} \quad  (y_{\tau}, ~z^{ij}_{\tau}, \mathbf{z}^{\textit{comf}}) \in \mathcal{F}_{\textit{th},\tau} ~, \end{split}
\end{equation}
\normalsize
where $\rho_{w}$ is the consumer comfort coefficient, and the set $\mathcal{F}_{\textit{th},\tau}$ enforces the constraints \eqref{eq:powerbalance}, \eqref{eq:MIstart} to \eqref{eq:ZtoY}, and \eqref{eq:EWHstart} to \eqref{eq:EWHend}. For a system with both an EWH and an ERH, set $\mathcal{F}_{\textit{th},\tau}$ in \eqref{eq:FTLOpt} is replaced with the set $\mathcal{F}'_{\textit{th},\tau}$, which now also includes constraints \eqref{eq:ERHstart} to \eqref{eq:ERHend}.

%%%%%%%%%%%%%%%%%%%%%%%%%%%%%%%%%%%%%%%%%%%%%%%%%%%%%%%%%%%%%%%%%%%%%
\section{Numerical Experiments}\label{Sec:NumEx}
House $23618$ from the Residential Building Stock Assessment (RBSA) database \cite{RBSA2014} was arbitrarily chosen and used for the numerical simulations. This house is based in Emmett, Idaho, USA, which has a semi-arid climate with cold winters and multiple heating-days. Weather data with 5-minute resolution from Boulder, Colorado, USA, which has a similar climate, was used in the simulations. The HEMS controllers from Section \ref{Sec:ControllerSetup} were simulated for 180 heating-days with hourly resolution in MATLAB 2018a and the Gurobi 8.1.0 optimisation solver.  

For simplicity, we assume that the incoming water supply temperature is constant, and that $\mu_{w}$ and $\rho_{w}$, which can be time-dependent, are also constant. Moreover, for ease of comparison, we assume that the controller has perfect knowledge of the sensitive load across the prediction horizon, and that the models used in the controller accurately represent the actual systems. The equivalent energy storage capacity of an FTL is hard to estimate, depends on many stochastic parameters such as weather conditions and consumer behaviour, and remains an ongoing research challenge. For the simulations, we assumed that this capacity is given by the average daily thermal demand of the household over the simulation period, considering the simulation setup and assumptions. The general simulation parameters are given in Table \ref{tab:GenParam}, while Table \ref{tab:SysParam} gives the system specific parameters. For the FTL-based controllers, $\rho_{w} = 10$.
% General Parameters
\begin{table}
\renewcommand{\arraystretch}{1.1}
\centering
\caption{General parameters used in the simulations.}
\label{tab:GenParam}
\begin{tabular}{|l|c|} \hline 
Prediction horizon, $W+1$               & $24$ \\
MI approximate sample size\footnotemark, $N_{\varepsilon}$                       & $201.6$ \\  
Number of $\mathcal{X}$ Bins, $m$       & $24$  \\    
Number of $\mathcal{Y}$ Bins, $n$       & $24$ \\     
Energy Price (peak)                     & $24.6$ cents/kWh\\
Energy Price (off-peak)                 & $13.15$ cents/kWh\\ 
Minimum grid load, $Y^{\text{min}}$     & $0$ kW \\   
Maximum grid load, $Y^{\text{max}}$     & $12$ kW \\
\hline 
\end{tabular} \vspace{-0.2cm}
\end{table}

\footnotetext{including additive smoothing constant}

% System Parameters
\begin{table}
\renewcommand{\arraystretch}{1.1}
\centering
\caption{System-specific simulation parameters.}
\label{tab:SysParam}
\begin{tabular}{|l|c|c|c|} \hline
                                    & ESS           & EWH               & ERH \\ \hline
Equivalent storage cap.             & $6.29$ kWh    & $6.29$ kWh        & $32.63$ kWh \\
Power rating                        & $5.5$ kW      & $5.5$ kW          & $4.5$ kW  \\
$1$-way efficiency / COP                      & $96\%$        & $1$               & $1$ \\
Absolute min. temp.                 & -             & $50 ^\circ$C      & - \\
Absolute max. temp.                 & -             & $90 ^\circ$C      & - \\
Consumer set-point                  & -             & $75 ^\circ$C      & $22 ^\circ$C \\
Mains water temp.                   & -             & $10 ^\circ$C      & -\\
Water heat cap., $C_p$              & -             & $4.19$ kJ/K       & -\\
$C_{\textit{ewh}}^{low}$            & -             & $356.15$ kJ/K     & -\\
$C_{\textit{ewh}}^{up}$             & -             & $356.15$ kJ/K     & -\\
Thermal coeff., $\textit{UA}^{low}_{\textit{ewh}}$  & -             & $5.82$e-4 kW/K    & -\\
Thermal coeff., $\textit{UA}^{up}_{\textit{ewh}}$   & -             & $5.82$e-4 kW/K    & -\\
$\gamma_1$                          & -             & -                 & $1.50$e-2 \\
$\gamma_2$                          & -             & -                 & $1.86$e-1\\
$\gamma_3$                          & -             & -                 & $3.45$e-1\\
\hline 
\end{tabular} \vspace{-0.2cm}
\end{table}

The majority of EWHs and ERHs that are currently installed are step loads, while the thermodynamics of the systems are, in reality, continuous. Smaller simulation step sizes would better capture the actual system dynamics, at the expense of computational tractability. Hence, to better match realistic systems and illustrate the mismatch between optimisation models and reality, the continuous duty-cycles from the hourly HEMS controllers were also converted into 5-minute on-off cycles by a secondary controller for system dynamics simulations. This controller attempts to match the HEMS' duty cycle, whilst also enforcing the FTL constraints in Section \ref{Sec:ControllerSetup} at 5-minute resolution. Note that the accuracy of the thermodynamic models is beyond the scope of this paper. To further explore the privacy-protection of both ESS- and FTL-based systems, HEMS controllers that do not consider energy costs were also simulated.

Fig. \ref{fig:GridLoadCurves} shows the load profiles from an ESS-based system and an EWH-based system with discretised control actions (5-minute simulation interval), with $\mu_{w} = 5$, and considering energy costs. As illustrated, the reduced flexibility of the EWH-based system limits its ability to mask the sensitive load, resulting in more instances where the sensitive load is revealed, e.g., around time steps $2866$, $2893$ and $2916$ (highlighted in grey). The ESS is also shown to have a single charge-discharge cycle within 24 hours. Note that here, $H(X;Y)$ is maximised instead as it was impossible to achieve minimal $H(Y)$.

Quantitatively, the privacy leakage of the various systems were assessed by first treating the loads as i.i.d. (IID MI) processes, and then as stationary first-order Markov processes (Markov MI), using the MI estimation methods described in \cite{Chin2018}. It is important to note that the MI estimation methods assume that the FTLs are not privacy-sensitive, \emph{i.e.}, the privacy leakage from the FTL use is not considered. This is particularly important when interpreting the results for $\mu_{w} = 0$ and energy cost is not considered in the objective function. Table \ref{tab:Results} summarises the MI estimates from the various systems. 

Both the ESS- and EWH-based systems reached their privacy protection limit without sacrificing the other objectives with $\mu_w=5$. As seen, the ESS system has less than half the privacy leakage compared to the EWH system with $\mu_w>0$. Without considering energy costs, it can be seen that the ESS achieves much lower MI values (with multiple charge-discharge cycles within a day), while there is only marginal improvement for the EWH due to comfort considerations. With a maximum water draw of $112$ litres within an hour from the $170$ litre hot water tank, there is insufficient flexibility when using the EWH to protect privacy with a $1^\circ$C dead-band. Due to the limited operational flexibility afforded by the hot water tank size and safety considerations, the operation of the EWH does not vary by much given different values of $\mu_{w}$. This can be seen in Fig. \ref{fig:GridLoadCurves_NoECost}, which plots the load curves for different values of $\mu_{w}$, without considering energy cost.

The marginal increase in MI for the ESS without energy costs is due to the binary variables (multiple solution candidates). Moreover, the effects of model mismatch is briefly studied by comparing the 5-minute step load versus non-step load hourly EWH simulations. The actual operation of the EWH differs from the solution of the hourly control actions, as the system dynamics require the secondary controller to make minor adjustments in order to prevent constraint violations (e.g., more accurate water mixing and loss modelling). While one could use models that better represent the continuous dynamics of the thermal system, model mismatch is inevitable in reality, but that is beyond the scope of this paper. The minor adjustments by the secondary controller eventually led to minor reduction in MI in most cases, but that is coincidental.
% max and min water temp 64.47C and 85.66C for \mu = 10; so really the set point issue
% max water draw to tank size 0.3729 kg s-1 for one hour 112 litres from a 170 litre tank

Even when combining the EWH with an ERH, the privacy protection afforded still falls below that of the ESS with a fraction of the storage capacity for $\mu_w>0$. More importantly, the use of the ERH for privacy entails a significant Markov MI increase, due to the time-correlated dynamics of the system. While there is substantial MI reduction for all systems even with $\mu_w=0$ (the i.i.d. entropy / MI for the sensitive load is $2.710$ bits), if the EWH and ERH usage is privacy-sensitive, then at $\mu_w=0$, the EWH and ERH profiles are unprotected and fully reveal the information contained by their usage.

The limitation of the ERH in providing more privacy protection even when energy costs are ignored, again, lies in the fact that the temperature dead-band is $1^\circ$C, limiting flexibility. This dead-band prevents over-heating the space or letting it cool below comfortable levels. Note that there is very low IID MI when $\mu_w=0$ for the combined EWH and ERH system. This is due to the fact that coincidentally, the period when there is high space heating demand is also the period with high private information leakage (occupied and low-load night periods); and that the ERH usage is assumed to not reveal private information. Fig. \ref{fig:ERHLoadCurve_NoECost} illustrates the sensitive load and grid load curves of the system with an ERH and EWH, without considering energy cost. As shown, while the FTL system is running most of the time, the peak FTL energy demand coincides with the sensitive load troughs, e.g., between time step $2850$ and $2865$. Moreover, given comfort considerations, the controller has limited flexibility in rescheduling the FTL energy demand; as reflected by the cumulative energy consumption of the $\mu_{w}=10$ curve closely matching that of the comfort-only curve within short periods of time.

\begin{figure*}
    \centering
    % Trim = left, bottom, right, top
    \includegraphics[trim=8cm 11cm 6cm 11.5cm, clip=true, width=1.80\columnwidth]{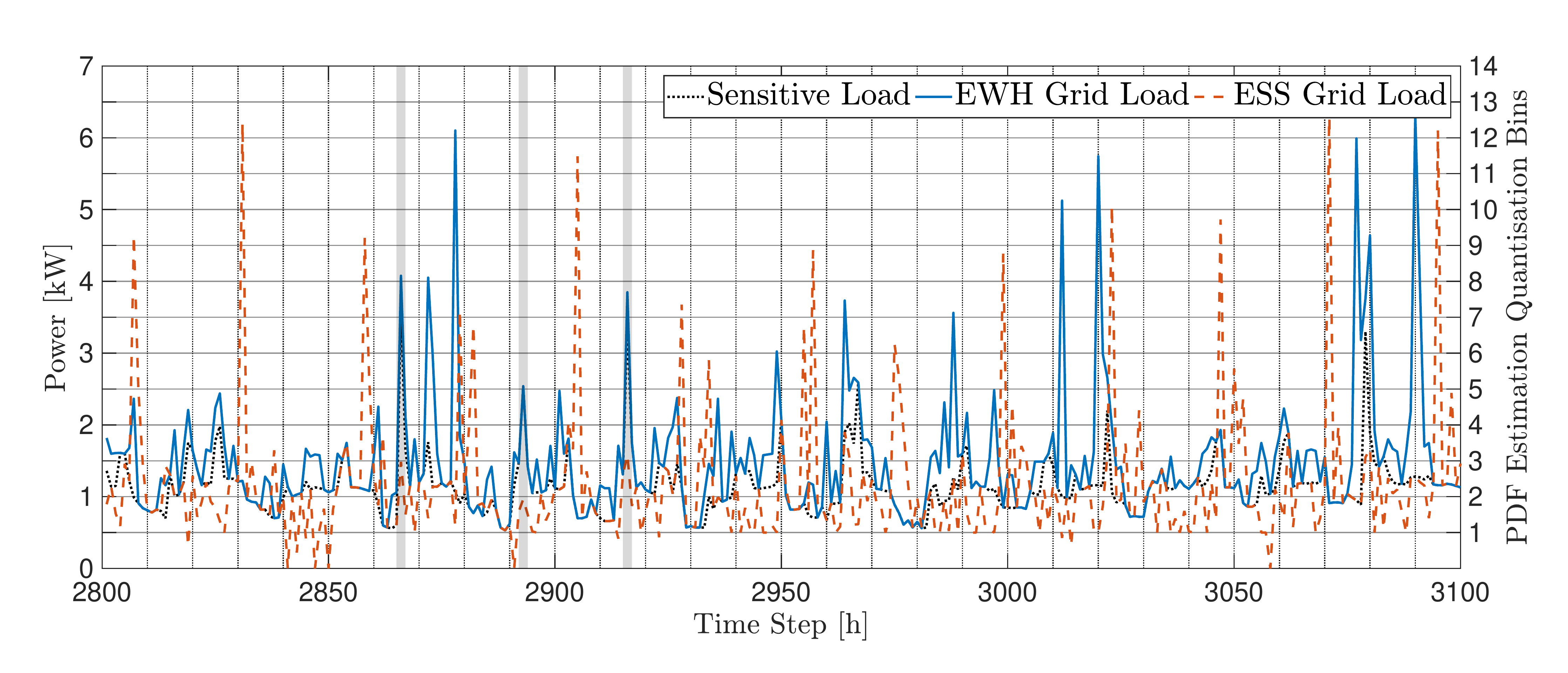}
    \vspace{-0.2cm}
    \caption{The sensitive load, and grid loads with $\mu_w=5$, illustrating how the privacy protection algorithm has shaped the grid visible load to mask the information in the sensitive load, e.g., load peaks and troughs.}
    \label{fig:GridLoadCurves}
\end{figure*}

\begin{figure*}
	\centering
	% Trim = left, bottom, right, top
	\includegraphics[trim=8cm 11cm 6cm 11.5cm, clip=true, width=1.80\columnwidth]{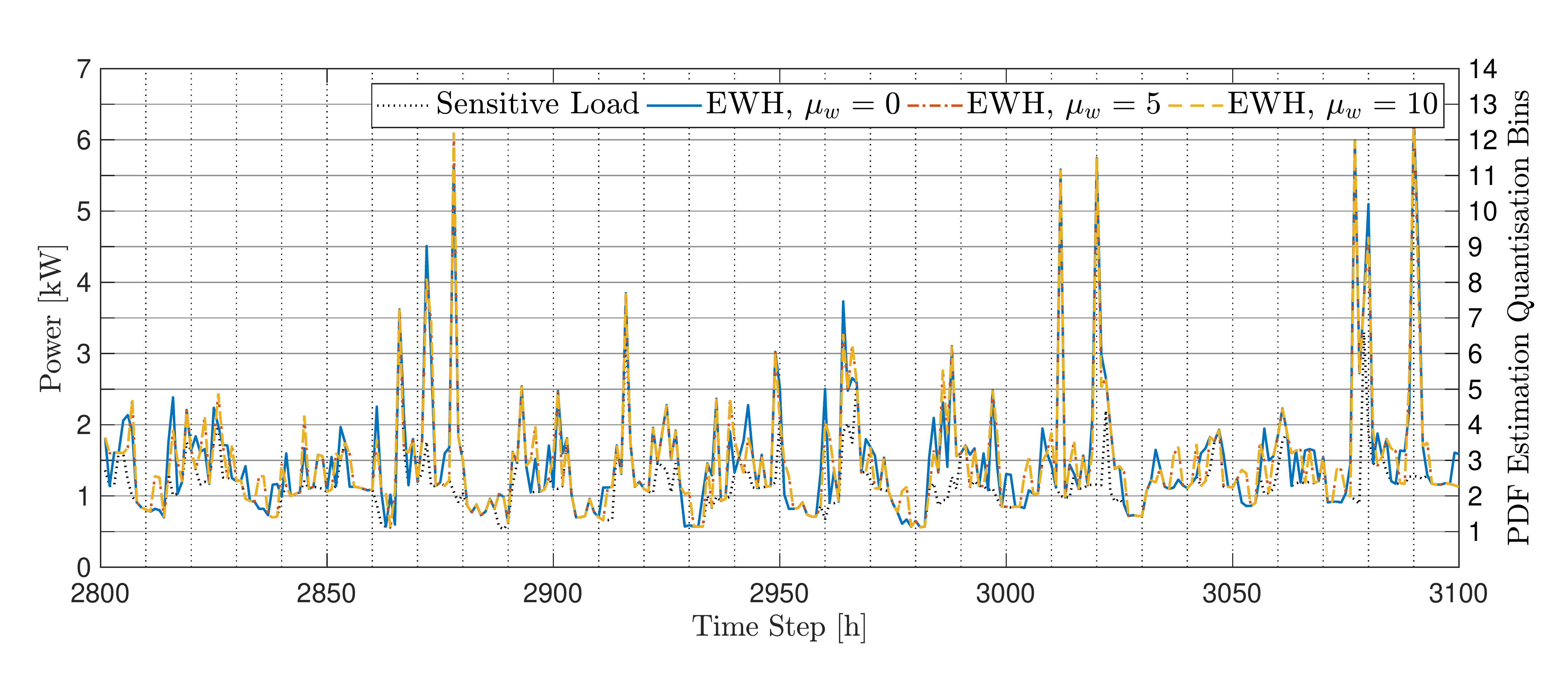}
	\vspace{-0.2cm}
	\caption{The sensitive load, and grid loads for EWH schemes without energy cost and with different $\mu_w$ values, showing that there is a lack of flexibility in the EWH load due to consumer comfort and safety constraints.}
	\label{fig:GridLoadCurves_NoECost}
\end{figure*}

\begin{figure*}
	\centering
	% Trim = left, bottom, right, top
	\includegraphics[trim=8cm 11cm 6cm 11.5cm, clip=true, width=1.80\columnwidth]{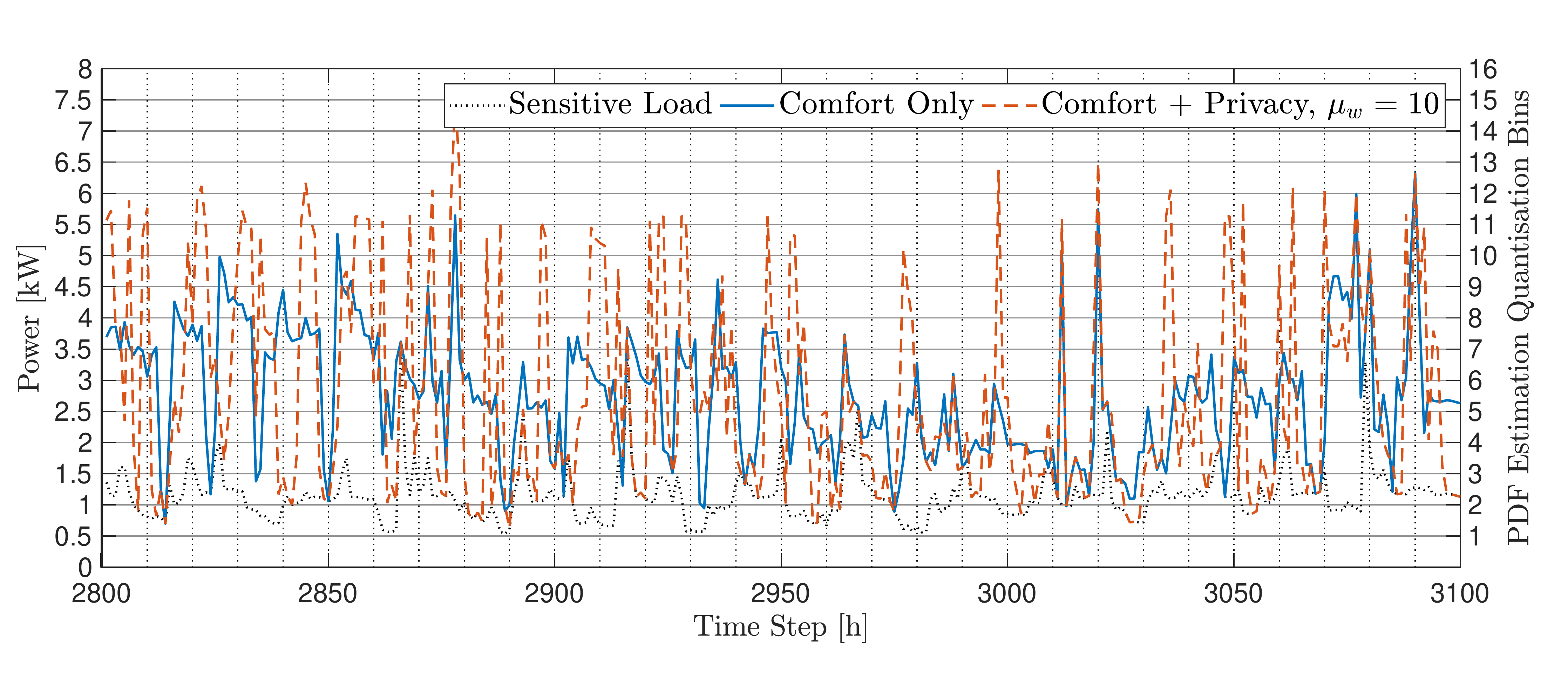}
	\vspace{-0.2cm}
	\caption{The sensitive load, and grid loads for ERH + EWH scheme without energy cost. The load profiles show that the energy consumption of the system when protecting consumer privacy matches that of the ``comfort-only'' setting within a small time window. This illustrates the operational inflexibility of the ERH + EWH system, given the tight temperature dead-bands.}
	\label{fig:ERHLoadCurve_NoECost}
\end{figure*}

% MI for 24 hour prediction horizon
\begin{table*}
\renewcommand{\arraystretch}{1.1}
\centering
\caption{Privacy loss of house $23618$ with a 24-hour prediction horizon; illustrating the level of privacy protection afforded by the different systems under different prices-of-privacy-loss.}
\label{tab:Results}
\begin{tabular}{|l|c|c|c|c|c|c|} \hline 
                                            & \multicolumn{2}{c|}{$\mu_{w}=0$}   & \multicolumn{2}{c|}{$\mu_{w}=5$}   & \multicolumn{2}{c|}{$\mu_{w}=10$}\\ \hline
                                            & IID MI    & Markov MI                 & IID MI    & Markov MI                 & IID MI    & Markov MI \\ \hline
ESS with energy costs                       & 0.565     & 0.709                     & 0.286     & 0.678                     & 0.287     & 0.653     \\ \hline
ESS without energy costs                    & -         & -                         & 0.149     & 0.672                     & 0.154     & 0.671     \\ \hline
Step load EWH with energy costs             & 0.656     & 0.859                     & 0.655     & 0.837                     & 0.647     & 0.825     \\ \hline
Step load EWH without energy costs          & 0.657     & 0.831                     & 0.633     & 0.817                     & 0.633     & 0.817     \\ \hline
Non-step load EWH with energy costs         & 0.791     & 0.941                     & 0.693     & 0.870                     & 0.679     & 0.864     \\ \hline
Non-step load EWH without energy costs      & 0.813     & 0.915                     & 0.628     & 0.821                     & 0.628     & 0.824     \\ \hline
Step load EWH and ERH with energy costs     & 0.367     & 1.062                     & 0.362     & 1.066                     & 0.362     & 1.080     \\ \hline
Step load EWH and ERH without energy costs  & 0.136     & 0.788                     & 0.326     & 1.214                     & 0.326     & 1.208     \\ \hline
\end{tabular}\vspace{-0.1cm} 
\end{table*}

In order to get a sense of how an oversized FTL system with more operational flexibility could affect privacy protection, the EWH-based system was simulated with a tank size of $255$ litres instead of $170$ litres ($50\%$ larger). At $\mu_w = 10$ and without considering energy costs, the larger step load EWH-based system achieved slightly better privacy, with an IID MI value of $0.614$ bits (versus $0.633$ bits). This is true, even when considering energy costs with $\mu_w = 10$ ($0.636$ versus $0.647$ bits). Nonetheless, this slight improvement in privacy protection would probably not justify the over-sizing of the FTL systems (compared to an investment in an ESS for cost optimisation and privacy protection).  % with energy cost $5.9129, without energy cost $5.9691; versus 170 litre tank's values of $5.9377 and $5.9859, respectively

As shown in the studies conducted in \cite{Arzamasov2020}, all else being equal, the characteristics of the underlying sensitive load profile affects the level of perceived privacy regardless of privacy metric. To better generalise the findings from the numerical study, the simulations were repeated with the same system parameters for the EWH, ERH and ESS as that used for House $23618$, but using the sensitive load profile of House $21355$ from the RBSA database. The privacy loss for the House $21355$ sensitive load profile, which has an i.i.d. entropy value of $2.246$ bits, is shown in Table \ref{tab:H21355Results}. As can be seen, better privacy protection is achieved for the House $21355$ profile. This is due to the various systems, \emph{i.e.}, EWH, ERH and ESS, being better able to match House $21355$'s lower peak consumption of $4.87$ kW (versus $5.22$ kW for House $23618$), and its lower daily energy consumption. While there is less noticeable improvement for the EWH and ERH-based system relative to one based on an ESS, the findings are similar to those from using the profile of House $23618$. This indicates that the findings from the numerical simulations, which validate the conclusion from the previous theoretical analysis in Section \ref{Sec:FTL-ESS_Theory}, are not specific to a particular sensitive load profile.

% House 21355 average daliy energy cost of Consumer Load = 3.0582
% MI for 24 hour prediction horizon, Results from S03
\begin{table*}

\centering{
\renewcommand{\arraystretch}{1.1}
\caption{Privacy loss of house $21355$ Using system parameters from house $23618$ and a 24-hour prediction horizon ; illustrating the level of privacy protection afforded by the different systems under different prices-of-privacy-loss.}
\label{tab:H21355Results}
\begin{tabular}{|l|c|c|c|c|c|c|} \hline 
                                            & \multicolumn{2}{c|}{$\mu_{w}=0$}   & \multicolumn{2}{c|}{$\mu_{w}=5$}   & \multicolumn{2}{c|}{$\mu_{w}=10$} \\ \hline
                                            & IID MI    & Markov MI              & IID MI    & Markov MI              & IID MI    & Markov MI   \\ \hline
ESS with energy costs                       & 0.287     & 0.515                  & 0.223     & 0.555                  & 0.232     & 0.539       \\ \hline
ESS without energy costs                    & -         & -                      & 0.140     & 0.564                  & 0.140     & 0.523       \\ \hline
Step load EWH with energy costs             & 0.530     & 0.771                  & 0.517     & 0.766                  & 0.519     & 0.763       \\ \hline
Step load EWH without energy costs          & 0.528     & 0.753                  & 0.496     & 0.738                  & 0.496     & 0.738       \\ \hline
Non-step load EWH with energy costs         & 0.686     & 0.876                  & 0.583     & 0.807                  & 0.563     & 0.802       \\ \hline
Non-step load EWH without energy costs      & 0.710     & 0.844                  & 0.513     & 0.792                  & 0.514     & 0.792       \\ \hline
Step load EWH and ERH with energy costs     & 0.269     & 0.989                  & 0.266     & 0.988                  & 0.266     & 0.997       \\ \hline
Step load EWH and ERH without energy costs  & 0.120     & 0.778                  & 0.271     & 1.070                  & 0.264     & 1.094       \\ \hline
\end{tabular}\vspace{-0.1cm}
}
\end{table*}

Table \ref{tab:ECost} summarises the percentage change in average daily energy cost of the various systems for House $23618$ relative to a no-privacy, non-cost-optimised solution. For the FTL-based systems, the cost basis is taken as the ``consumer comfort-only" setting, while the energy cost of the original sensitive load is used as the basis for the ESS-based systems. These are highlighted in yellow in Table \ref{tab:ECost}. As can be seen, the controller is able to reduce energy costs given a two-tier price tariff, while simultaneously protecting consumer privacy. Nonetheless, the cost savings are minor for the EWH-based systems due to the operational inflexibility of the EWH. For systems that utilise both the EWH and ERH, the cost savings are more significant (both in terms of absolute and relative amounts). This is due to the higher energy demand of the ERH coupled with the higher thermal inertia in the space heating system. For ESS-based systems, the relative cost savings are larger, but the absolute value is still overshadowed by the system's high investment costs. Based on estimates in \cite{EnergySage2020}, which are inline with projections in \cite{IRENA2019}, the latest prices for behind-the-meter battery packs (without installation and balance-of-system costs) are between \$400 and \$750 per kilowatt-hour. This translates to a simple payback period of between $11$ and $20$ years, which is not attractive given the expected lifetime of batteries. On the other hand, prioritising privacy protection (ignoring energy costs) leads to a slight increase in energy cost, which is comparable across both ESS- and FTL-based systems.

% ECost for 24 hour prediction horizon
\begin{table*}
\centering{
\renewcommand{\arraystretch}{1.1}
\caption{Average daily energy cost of house $23618$ with a 24-hour prediction horizon; showing how the costs change when different systems and prices-of-privacy-loss are used.}
\label{tab:ECost}
\begin{tabular}{|l|c|c|c|} \hline 
                                            & $\mu_{w}=0$   & $\mu_{w}=5$   & $\mu_{w}=10$  \\ \hline
ESS with energy costs                       & -14.96\%      & -14.98\%      & -15.00\%      \\ 
ESS without energy costs \footnotemark      & \cellcolor{yellow}\$4.1693      & +3.280\%      & +4.069\%      \\ \hline
Step load EWH with energy costs             & -0.539\%      & -0.531\%      & -0.538\%      \\ 
Step load EWH without energy costs          & \cellcolor{yellow}\$5.970       & +0.270\%      & +0.270\%       \\ \hline
Non-step load EWH with energy costs         & -1.081\%      & -1.068\%      & -1.055\%      \\ 
Non-step load EWH without energy costs      & \cellcolor{yellow}\$5.957       & +0.102\%      & +0.102\%       \\ \hline
Step load EWH and ERH with energy costs     & -9.086\%      & -9.137\%      & -9.116\%      \\ 
Step load EWH and ERH without energy costs  & \cellcolor{yellow}\$12.202      & +1.903\%      & +2.237\%       \\ \hline
\end{tabular}\vspace{-0.1cm} 
}
\end{table*}

Further experiments are required in order to fully generalise the empirical findings across different load profile characteristics, climate zones, and building and system parameters, which is left as the subject of future work.

%%%%%%%%%%%%%%%%%%%%%%%%%%%%%%%%%%%%%%%%%%%%%%%%%%%%%%%%%%%%%%%%%%%%%
\section{Conclusions and Future Outlook}\label{Sec:Conc}
The topic of smart meter consumer privacy is an important one, given that advanced metering infrastructures are often touted as the bedrock of the smart electric grid. Without viable solutions to protect consumer privacy, the deployment of smart meters could potentially be jeopardised. In this paper, we studied the use of resistive flexible thermal-based consumer loads for consumer privacy protection using user demand shaping methods, comparing them against systems using energy storage systems. By conducting a theoretical analysis using mutual information as the quantitative measure of privacy, we show that, based on the fact that flexible thermal-based consumer loads are unable to compensate sensitive load by `discharging', the level of protection afforded by them is below that of energy storage systems. Moreover, as seen from the numerical experiments, the inflexibility of these systems due to the time-specific nature of thermal demand limits their performance; unless one allows for large temperature fluctuations or use largely over-sized systems. Nonetheless, they are still able to afford some level of privacy protection, with incremental energy costs that are comparable to those of energy storage systems, but without the added upfront investment costs. Coupled with their increasing ubiquity in consumer households, research on utilising flexible thermal-based consumer loads in privacy protection schemes, whether standalone or in conjunction with energy storage systems, should be expanded. 

Future work will consider the use of inductive loads and loads with interruptible, but fixed cycle lengths for consumer privacy protection, and the generalisation of the findings using empirical studies.

\footnotetext{The energy cost of the underlying sensitive load is used as the basis for comparing the ESS-based schemes.}

% \newpage
\bibliographystyle{IEEEtran}
\bibliography{./FlexTherm}

\end{document}